\newtheorem{lemma}{Lemma}
\begin{document}

\title{Generalization of nonlinear control for nonlinear discrete systems}
\author{D. Dmitrishin, A. Stokolos, I. Skrynnik, E. Franzheva}

\begin{abstract}  
The problem of stabilization of unstable periodic orbits of discrete nonlinear systems is considered in the article. A new 
generalization of the delayed feedback, which solves the stabilization problem, is proposed. The feedback is represented as 
a convex combination of nonlinear feedback and semilinear feedback introduced by O. Morgul. In this article, the O. Morgul 
method was transferred from the scalar case to the vector one. It is shown that the additional introduction of the semilinear feedback into the equation makes it possible to significantly reduce the length of the prehistory used in the 
control and to increase the rate of convergence of the perturbed solutions to periodic ones. As an application of the 
proposed stabilization scheme, a possible computational algorithm for finding solutions of systems of algebraic equations 
is given. The numerical simulation results are presented.
\end{abstract}

\maketitle

\section{Introduction}

By control of chaos we mean a small external influence on the system or a small change in the structure of the system in order to transform the chaotic behavior of the system into regular (or chaotic, but with different properties) [1]. The problem of optimal influence on the chaotic regime is one of the fundamental problems in nonlinear dynamics [2, 3].

It is assumed that the dynamic system has a chaotic attractor, which contains a countable set of unstable cycles of different periods. If the control action locally stabilizes a cycle, then the trajectory of the system remains in its neighborhood, i.e. regular movements will be observed in the system. Hence, one of the ways to control chaos is the local stabilization of certain orbits from a chaotic attractor.

To solve the stabilization problem, various control schemes were proposed [4], among which controls based on the Delayed Feedback Control (DFC) principle are quite popular [5]. Such controls, under certain conditions, allow local stabilization of equilibrium positions or cycles, which, generally speaking, are not known in advance. Among the DFC schemes, linear schemes are the simplest for physical implementation. However, they have significant limitations: they can be used only for a narrow area of the parameter space that enter the original nonlinear system. The necessary conditions for the applicability of the linear feedback are formulated more precisely in Section 2.1.

To extend the class of systems to which the DFC scheme applies, it is necessary to introduce non-linear elements into the control. For the first time, a nonlinear DFC with one delay was considered in [6], where the advantages of such a modification are also noted, in particular, the fact that the control becomes robust. In [7, 8] the concept of nonlinear control with one delay from [6] was extended: to the vector case; to manage with several delays; to the case of an arbitrary period $T$. It is shown that the control allows to stabilize cycles of arbitrary lengths, unless the multipliers are real and greater than one. A relationship is established between the size of the localization set of multipliers and the amount of delay in the nonlinear feedback.

In [9, 10], a semilinear DFC scheme with linear and nonlinear elements was investigated. In spite of the fact that this scheme contains only one difference in control, nevertheless, it is possible to stabilize cycles with length $T = 1, 2$ under sufficiently general assumptions about cycle multipliers. For $T\ge 3$, the situation changes critically, and the stabilization of cycles is possible only if the rigid constraints on multipliers are met. The scheme of O. Morgul is considered in detail in Section 2.3. It will be generalized to the case of several differences in control, and transferred from scalar to vector case.

The purpose of the presented work is to improve the algorithms of M.Vieira de Souza, A.J. Lichtenberg, O. Morgul, and D. Dmitrishin: suppression of chaos in nonlinear discrete systems by local stabilization of cycles of a given length.

Accordingly, the problem consists of choosing the structure and parameters of the control system, in which beforehand unknown cycles of a given length would be locally asymptotically stable.

The paper considers delayed feedback in the form of a convex combination of nonlinear control and generalized control by O. Morgul. The characteristic polynomial of a closed system for a cycle of length $T$ is deduced and its structure has turned out to be quite simple. As a particular case, this polynomial contains the characteristic polynomials for non-linear control and generalized control of O. Morgul. The solution of the problem for stabilizing cycles of length one is given, i.e. equilibrium positions, and a theoretical basis is prepared for solving the problem in a general formulation for cycles of arbitrary length.

The special structure of the characteristic polynomial allows the use of methods of complex analysis. That is why the main method of constructing controls and investigating the conditions for their applicability is the geometric theory of functions of a complex variable. From the perspective of this theory, O. Morgul's approach to stabilizing cycles and the conditions for its applicability are analyzed. The analysis of the influence of the control parameters on the quality of control is carried out and it is indicated why the combined control is better than the nonlinear or semi-linear control separately. Finally, applications of the proposed scheme of combined control to the improvement of iterative methods for solving algebraic equations are considered.

\section{Review and preliminary results}

We consider a nonlinear discrete system, which in the absence of control has the form
\begin{equation}\label{(1)}
{x_{n + 1}} = f\left({x_n}\right), \quad {x_n} \in {\mathbb R^m}, \quad n = 1, 2, \ldots,
\end{equation}
where $f\left(x\right)$ is a differentiable vector function of the corresponding dimension. It is assumed that the system (1) has an invariant convex set $A$, that is, if $\xi \in A$, then $f\left(\xi\right) \in A$. It is also assumed that in this system there is one or more unstable $T$-cycles $\left({{\eta_1}, \ldots, {\eta_T}}\right)$, where all the vectors ${\eta_1}, \ldots, {\eta_T}$ are distinct and belong to the invariant set $A$, i.e. ${\eta_{j + 1}} = f\left({{\eta_j}}\right), j = 1, \ldots, T - 1, {\eta_1} = f\left({{\eta_T}}\right)$. The multipliers of the unstable cycles under consideration are defined as the eigenvalues of the products of the Jacobian matrices $\prod_{j = 1}^T {{{f'}_{}}\left({{\eta_j}}\right)}$ of dimensions $m \times m$. As a rule, the cycles $\left({{\eta_1}, \ldots, {\eta_T}}\right)$ of the system (1) are not a priori known as well as the spectrum of the matrix $\left\{ {{\mu_1}, \ldots, {\mu_m} } \right\}$ of the matrix $\prod_{j = 1}^T {f'\left({{\eta_j}}\right)}$.

It is required to describe the set $M$ in which it is possible to locally stabilize the $T$-cycle of the system (1) by one control from the admissible control class for all multipliers localized in $M$, $M \subset \bar C$ ($\bar C$ is the extended complex plane). I.e., so that the system
\[
{x_{n + 1}} = f\left({x_n}\right) + {u_n}
\]
would have a locally asymptotically stable $T$-cycle with multipliers in $M$, and on this cycle the control ${u_n}$ would vanish. In other words, we assume that for a given cycle length $T$, we know the estimate of the localization set of the multipliers $M$. In other words, we believe that the dynamic system is characterized not so much by the function $f$ (or a family of functions) as by the set of localization of multipliers of a cycle (or cycles) of known length.

\subsection{Linear control}

As a control, let us consider a law based on linear feedback
\begin{equation}\label{(2)}
{u_n} = - \sum_{j = 1}^{N - 1} {{\varepsilon_j}\left({{x_{n-jT+T}} - {x_{n-jT}}}\right)}, 
\end{equation}
where the gain should be limited: $\left|{{\varepsilon_j}} \right| < 1$, $j = 1, \ldots, N - 1$, $T = 1, 2, \ldots$. Accordingly, the system closed by such a control has the form
\begin{equation}\label{(3)}
{x_{n + 1}} = f\left({x_n}\right) - \sum_{j = 1}^{N - 1} {{\varepsilon_j}\left({{x_{n - jT + T}} - {x_{n - jT}}}\right)}.
\end{equation}
Note that when the state ${x_{k + T}} = {x_k}$, $k = 1, 2, \ldots$ is synchronized, the control (\ref{(2)}) vanishes, i.e. the closed system (3) takes the form as in the absence of control. This means that the $T$-cycles of the system (1) are $T$-cycles of the system (3).

Consider the case $T = 1$. It is required to find the necessary conditions in terms of the localization set of the multipliers $M$ for which the equilibrium position of the system (3) is locally asymptotically stable (or sufficient conditions under which this equilibrium position is unstable). It is shown in [11] that the set $M$ of localization of multipliers of system (1) can not be arbitrarily large for any linear control of the form (2), more precisely, its diameter can not exceed sixteen, and the diameter of its each connected component is at most four and not depending on the dimension $m$ of the system neither on the number $N$ in the control (2).

This conclusion imposes significant limitations on the practical application of linear control. We also note one more drawback of linear control (2): the invariant convex set $A$ of system (1) will not be invariant for system (3).

\subsection{Nonlinear control}

Another type of feedback -- nonlinear -- has the form
\begin{equation}\label{(4)}
{u_n} = - \sum_{j = 1}^{N - 1} {{\varepsilon_j}\left({f({x_{n - jT + T}}) - f({x_{n - jT}})}\right)},
\end{equation}
and the corresponding closed system
\begin{equation}\label{(5)}
{x_{n + 1}} = \sum_{j = 1}^N {{a_j} f\left({{x_{n - jT + T}}}\right)},
\end{equation}
where ${a_1} = 1 - {\varepsilon_1}$, ${a_j} = {\varepsilon_{j - 1}} - {\varepsilon_j}$, $j = 2$, $\ldots$, $N - 1$, ${a_N} = {\varepsilon_{N - 1}}$. It is clear that $\sum_{j = 1}^N {{a_j} = 1}$. Only those controls of the form (4) for which $0 \le {a_j} \le 1$, $ j = 1, \ldots, N$ are considered admissible.

When the state ${x_{k + T}} = {x_k}$, $k = 1, 2, \ldots$ is synchronized, the control (4) vanishes, and the closed system (5) takes the form as in the absence of control. Therefore, the $T$-cycles of the system (1) are $T$-cycles of the system (5). In addition, the invariant convex set $A$ of system (1) remains invariant for system (5).

As shown in [15], for any set $M$ of localization of the multipliers of $T$-cycles of system (1) that does not contain real numbers greater than one, there exists a control of the form (4) for which in system (5) these $T$-cycles will be locally asymptotically stable. Thus, the specified control will have the property of robustness.

We give a solution of the problem of choosing the coefficients ${a_j}$, $j = 1, \ldots, N$, for special cases of the localization sets of multipliers,

case $\mathbf{A}$: $M = \left\{ {\mu \in \mathbb{R}: \mu \in \left({ - \hat \mu, 1}\right)} \right\}$, $\hat \mu > 1$,

case $\mathbf{B}$: $M = \left\{ {\mu \in \mathbb{C}: \left| {\mu + R} \right| < R} \right\}$, $R > {1 \mathord{\left/{\vphantom {1 2}} \right.\kern-\nulldelimiterspace} 2}$.

The algorithm for finding the minimal $N$ and the coefficients $\left\{ {{a_1}, \ldots, {a_N}} \right\}$ consists of the following steps [12]:

a) nodes are calculated:
\[
{\psi_j} = \pi \frac{{\sigma + T(2j - 1)}}{{\sigma + T(N - 1)}},
\]
$j = 1, 2, \ldots, \frac{{N - 2}}{2}$, if $N$ is even; $j = 1, 2, \ldots, \frac{{N - 1}}{2}$, if $N$ is odd; in case $\mathbf{A}$, we should assume $\sigma = 2$, and in case $\mathbf{B}$, assume $\sigma = 1$;

b) the following polynomials are constructed
\[
{\eta_N}\left(z\right) = z\left({z + 1}\right)\prod_{j = 1}^{\frac{{N - 2}}{2}} {\left({z - {e^{i{\psi_j}}}}\right)\left({z - {e^{ - i{\psi_j}}}}\right)},
\]
if $N$ is even;
\[
{\eta_N}\left(z\right) = z\prod_{j = 1}^{\frac{{N - 1}}{2}} {\left({z - {e^{i{\psi_j}}}}\right) \left({z - {e^{ - i{\psi_j}}}}\right)},
\]
if $N$ is odd;

c) the coefficients of the polynomial ${\eta_N}\left(z\right) = \sum_{j = 1}^N {{c_j}{z^j}}$ are calculated (for example, by Vieta's formulas);

d) the coefficients $a_j$ are computed:
\[
{a_j} = \frac{{\left({1 - \frac{{1 + (j - 1)T}}{{2 + (N - 1)T}}}\right){c_j}}}{{\sum_{k = 1}^N {\left({1 - \frac{{1 + (j - 1)T}}{{2 + (N - 1)T}}}\right){c_k}} }}, j = 1, \ldots, N;
\]

e) in case $\mathbf{A}$, we introduce the quantities
\[J_N^{(T)} = - {\left[ {\frac{T}{{2 + (N - 1)T}} \prod_{k = 1}^{\frac{{N - 2}}{2}} {ct{g^2}\frac{{\pi (2 + T(2k - 1))}}{{2(2 + (N - 1)T)}}} } \right]^T}\]
if $N$ is even;
\[J_N^{(T)} = - {\left[ {\prod_{k = 1}^{\frac{{N - 1}}{2}} {ct{g^2}\frac{{\pi (2 + T(2k - 1))}}{{2(2 + (N - 1)T)}}} } \right]^T}\]
if $N$ is odd; the optimal value of $N$ is computed as a minimal natural number satisfying the inequality
$${\mu^*} \le \frac{1}{{\left| {J_{_N}^{(T)}} \right|}};$$

f) in case $\mathbf{B}$ we introduce the quantities
\[\hat J_N^{(T)} = - {\left[ {\frac{T}{{1 + (N - 1)T}} \prod_{k = 1}^{\frac{{N - 2}}{2}} {ct{g^2}\frac{{\pi (1 + T(2k - 1))}}{{2(1 + (N - 1)T)}}} } \right]^T}\]
if $N$ is even;
\[\hat J_N^{(T)} = - {\left[ {\prod_{k = 1}^{\frac{{N - 1}}{2}} {ct{g^2}\frac{{\pi (1 + T(2k - 1))}}{{2(1 + (N - 1)T)}}} } \right]^T}\]
if $N$ is odd; the optimal value of $N$ is computed as a minimal natural number satisfying the inequality 
$$R \le \frac{1}{{2 \left|{\hat J_{_N}^{(T)}} \right|}}.$$

We note that for $\sigma \in \left\{ {1, 2} \right\}$ and $T = 1, 2$, the polynomials ${F_T}\left(z\right) = z {\left({{a_1} + \ldots + {a_N} {z^{N - 1}}}\right)^T}$ are univalent in the central unit disk $D = \left\{ {z \in \mathbb{C}: \left| z \right| < 1} \right\}$. Apparently, the univalence property of polynomials is true for $\sigma \in \left[ {0, 2} \right]$ and for all $T.$ For different $\sigma$ the set $M$ of localization of the multipliers of $T$-cycles of the system (1) must lie in the half-plane $\left\{ {z \in \mathbb{C}: {\mathop{\rm Re}\nolimits} z < 1} \right\}$ (Figure 1-a, 1-b).
\begin{figure}[h]
\begin{minipage}{0.4\textwidth}
\includegraphics[scale=.5]{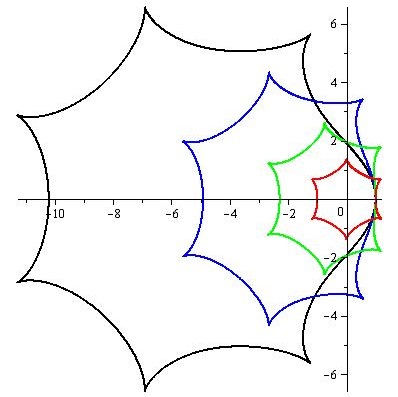}\\ a)
\end{minipage}
\hfil
\begin{minipage}{0.4\textwidth}
\includegraphics[scale=.5]{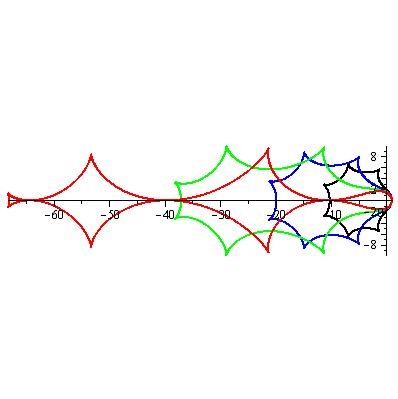}\\ b)
\end{minipage}
\caption{Coverings of the set $M$ of localization of multipliers under
a) $T = 3$, $N = 7$, $\sigma \in \left[ {0, 1} \right]$: $\sigma = 1$ -- black, $\sigma = 0.66$ -- blue, $\sigma = 0.33$ -- green, $\sigma = 0$ -- red;
b) $T = 3$, $N = 7$, $\sigma \in \left[ {1, 2} \right]$: $\sigma = 1$ -- black, $\sigma = 1.33$ -- blue, $\sigma = 1.66$ -- green, $\sigma = 2$ -- red}
\end{figure}

\subsection{Semilinear control}

 To stabilize the cycle of the length $T = 3$ O. Morgul [9, 10]
proposed a feedback  control that includes linear and nonlinear elements, i.e., a semilinear feedback control of the form
\begin{equation}\label{(6)}
{u_n} = - \varepsilon \left({f({x_n}) - {x_{n - T + 1}}}\right),
\end{equation}
for which a corresponding closed-loop system is
\begin{equation}\label{(7)}
x_{n+1}=(1-\varepsilon) f(x) + \varepsilon x_{n-T+1}, 
\end{equation}
where $\varepsilon \in \left[ {0, 1}\right)$. On the cycle, the conditions $f({x_n}) = {x_{n + 1}} = {x_{n - T + 1}}$ are fulfilled, therefore, on the cycle ${u_n} \equiv 0$.

We note that in [9] only the scalar case $f: \mathbb{R} \to \mathbb{R}$ was considered. However, Morgul's scheme can be generalized to the vector case, as will be shown below. In setting the invariant convex set $A$ of system (1) remains invariant for the system (7) too. If we assume that $\varepsilon \in \left[ {0, \infty }\right)$ [9], then the convex invariant set can not be preserved, although in this case it is possible to stabilize the equilibrium positions with multipliers from the half-plane $\left\{z \in \mathbb{C}: {\mathop{\rm Re}\nolimits} z > 1\right\}$.

The characteristic equation for the $T$-cycle, in the scalar case, has the form [10]
\begin{equation}\label{(8)}
\left(\lambda-\varepsilon\right)^T-\mu{\left(1-\varepsilon\right)^T}{\lambda^{T-1}} = 0, 
\end{equation}
where $\mu$ is the multiplier of the cycle. Accordingly, in the vector case the characteristic equation takes the form
\begin{equation}\label{(9)}
\prod_{j = 1}^m {\left[ {\left({\lambda - \varepsilon }\right){ ^T} - {\mu_j} {{\left({1 - \varepsilon }\right)}^T} {\lambda ^{T - 1}}} \right]} = 0, 
\end{equation}
where $\mu_j$ are multipliers of the cycle ($j = 1, \ldots, m$), in general, complex. Equation (9) is obtained as a special case of a more general characteristic equation, which we derive in Section 3.

If all the roots of equation (9) lie in an open central unit disc $D$, then the $T$-cycle is locally asymptotically stable [10, 13]. If the multipliers $\mu { _j}$, $j = 1, \ldots, m$ are known exactly, then one can check whether the roots belong to the central unit circle by known criteria such as Schur-Cohn, Clark, Jury [14]. However, cycles are not known, hence, multipliers are not known. In this case, the geometric criterion of A. Solyanik proved to be effective for the stability of cycles of discrete systems [15]. Let us apply this criterion.

Making the change $\lambda = \frac{1}{z}$, we write equation (9) as a set of equations
\[
\left[\begin{array}{*{20}{c}}
\frac{1}{\mu_j} = \Phi \left(z\right),\\
j = 1, \ldots, m,
\end{array}\right.
\]
where $\Phi \left(z\right) = {\left({1 - \varepsilon }\right)^T} \frac{z}{{{{\left({1 - \varepsilon z}\right)}^T}}}$. The following observation is extremely useful in our settings.

\begin{lemma}\label{lemma1}
All the roots of equation (9) lie in the central unit circle if and only if 
\begin{equation}\label{(10)}
\mu_j\in \left(\overline{\mathbb{C}}\backslash\Phi(\overline{D})\right)^*, \quad j = 1, \ldots, m,
\end{equation}
where $\overline{D} = \left\{z \in \mathbb{C}: \left|z\right|\le 1\right\}$ is a closed central unit disk, $\overline{\mathbb{C}}$ is an extended complex plane, the asterisk denotes the inversion  ${\left(z\right)^*} = \frac{1}{{\bar z}}$. Here $\bar z$ denotes the complex conjugated of $z$.
\end{lemma}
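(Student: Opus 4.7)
The plan is to reduce the vector characteristic equation (9) to $m$ independent scalar equations and then uniformize each via the change of variable $\lambda = 1/z$, so that the admissible region in the $\mu$-plane appears as the complement of a level set of the fixed rational function $\Phi$. First I would note that (9) factors as the product of the polynomials
\[
P_j(\lambda) \;=\; (\lambda - \varepsilon)^T - \mu_j\,(1-\varepsilon)^T\,\lambda^{T-1}, \qquad j = 1,\ldots,m,
\]
so every root of (9) is a root of some $P_j$, and ``all roots of (9) lie in $D$'' is the conjunction, over $j$, of ``all $T$ roots of $P_j$ lie in $D$''. Since $P_j(0) = (-\varepsilon)^T \ne 0$ for $\varepsilon \in (0,1)$, zero is never a root, so I may divide $P_j(\lambda) = 0$ by $\lambda^{T-1}$ and rewrite it as $1/\mu_j = (1-\varepsilon)^T \lambda^{T-1}/(\lambda - \varepsilon)^T$. (The limiting case $\varepsilon = 0$ gives $\lambda^{T-1}(\lambda - \mu_j) = 0$, for which the lemma reduces to $|\mu_j| < 1$, trivially (10) with $\Phi(z) = z$.)

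Next, substituting $\lambda = 1/z$ on the right-hand side and simplifying yields exactly $\Phi(z)$, so the $T$ roots of $P_j$ correspond bijectively, through $z = 1/\lambda$, to the $T$ solutions in $z \in \overline{\mathbb{C}}$ of the single equation $1/\mu_j = \Phi(z)$. Under $\lambda \mapsto 1/\lambda$ the open unit disk in the $\lambda$-plane is carried onto the exterior $\overline{\mathbb{C}} \setminus \overline{D}$ in the $z$-plane (with $\lambda = 0$ mapping to $z = \infty$). Hence, for a fixed $j$, every root of $P_j$ lies in $D$ if and only if no preimage of $1/\mu_j$ under $\Phi$ meets $\overline{D}$, that is, $1/\mu_j \in \overline{\mathbb{C}} \setminus \Phi(\overline{D})$.

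To convert this to the form (10) I would invoke the real-axis symmetry of $\Phi(\overline{D})$: because $\varepsilon$ is real, $\Phi(\bar z) = \overline{\Phi(z)}$, so $\Phi(\overline{D})$ is invariant under complex conjugation. Consequently $1/\mu_j \notin \Phi(\overline{D})$ if and only if $1/\overline{\mu_j} = \overline{1/\mu_j} \notin \Phi(\overline{D})$, which by the definition $(w)^* = 1/\overline{w}$ reads precisely $\mu_j \in (\overline{\mathbb{C}} \setminus \Phi(\overline{D}))^*$. Applying this equivalence for each $j$ gives (10), and every step is reversible.

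The one place requiring care, and essentially the only obstacle, is extended-plane bookkeeping: $\Phi$ must be regarded as a self-map of $\overline{\mathbb{C}}$ (so that $\Phi(1/\varepsilon) = \infty$ and $\Phi(\infty) = 0$), which guarantees an exact degree-$T$ identification between roots of $P_j$ and solutions of $1/\mu_j = \Phi(z)$ and handles the exceptional point $\lambda = 0 \leftrightarrow z = \infty$ consistently with the convention that $\infty \notin \overline{D}$. No deeper analytic input is needed; the lemma is ultimately a clean geometric reformulation of the scalar root-location problem in terms of the image $\Phi(\overline{D})$.
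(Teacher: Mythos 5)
Your argument is correct and follows essentially the same route the paper uses (the paper states Lemma 1 without proof but proves the general version, Lemma 3, by exactly this change of variables $\lambda = 1/z$ and the reformulation $1/\mu_j \notin \Phi(\overline{D})$). You are in fact slightly more careful than the paper at the final step, where passing from $1/\mu_j \notin \Phi(\overline{D})$ to $\mu_j \in \left(\overline{\mathbb{C}}\setminus\Phi(\overline{D})\right)^*$ genuinely requires the conjugation symmetry $\Phi(\bar z) = \overline{\Phi(z)}$ that you invoke; the only blemish is the parenthetical claim $\Phi(\infty)=0$, which fails for $T=1$ but is immaterial to the argument since $\infty \notin \overline{D}$.
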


Note that the set ${\left({\bar C\backslash \Phi (\bar D)}\right)^ * }$ is the inverse of the set of exceptional values of the image of the disk under the mapping $\Phi \left(z\right)$. According to Lemma 1, the $T$-cycle will be locally asymptotically stable if the set ${\left({\bar C\backslash \Phi (\bar D)}\right)^ * }$ covers the set $M$ of localization of multipliers. The condition (10) can be rewritten as $M \subseteq {\left({\bar C\backslash \Phi (\bar D)}\right)^ * }$, or in the equivalent form $\Phi (D) \subseteq \bar C\backslash {\left({\bar M}\right)^ * }$. This means that the set ${\left({\bar M}\right)^ * }$ must be exceptional for the image of the disk $D$ under the mapping $\Phi \left(z\right)$.

Let us consider some examples.

\begin{figure}[h]
\includegraphics[scale=.5]{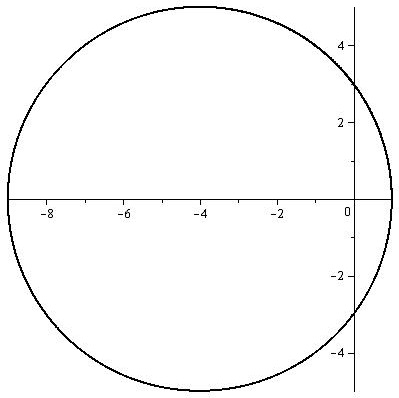}
\caption{Множество ${\left({\bar C\backslash \Phi (\bar D)}\right)^ * }$ for $T = 1$, $\gamma = 0.8$.}
\end{figure}

Example 1. Let $T = 1$. In this case the set
$$
{\left({\bar C\backslash \Phi (\bar D)}\right)^ * } = \left\{ {w \in \mathbb{C}: w = - \frac{\varepsilon }{{1 - \varepsilon }} + \frac{1}{{1 - \varepsilon }} z, z \in D} \right\},
$$
that is, this set is an open circle with center at the point $\left(-\frac{\varepsilon}{1-\varepsilon}, 0\right)$ and of radius $\frac{1}{\left|1 - \varepsilon\right|}$ 
(Fig 2). If $\varepsilon \to 1^-$, the disk converges to the half-plane $\left\{ {w \in \mathbb{C}: w < 1} \right\}$. If $\varepsilon \to 1^+$ the disc converges to the half-plane $\left\{ {w \in \mathbb{C}: w > 1} \right\}$. Therefore, if the set $M$ lies in the half-plane $\left\{ {w \in \mathbb{C}: w < 1} \right\}$ or $\left\{ {w \in \mathbb{C}: w > 1} \right\}$, then the equilibrium position of the system (1) can be stabilized by the control of the form (6).\\

Example 2. Let $T = 2$. Then  
\[
{\left({\bar C\backslash \Phi (\bar D)}\right)^ * } = \left\{ {w \in \mathbb{C}: w = \frac{{2 \varepsilon }}{{{{\left({1 - \varepsilon }\right)}^2}}}\left({\frac{1}{2}(\frac{1}{{\varepsilon z}} + \varepsilon z) - 1}\right), z \in D} \right\},
\]
that is, it is the interior of an ellipse with semiaxes $\left\{ {\frac{{1 + {\varepsilon ^2}}}{{{{(1 - \varepsilon )}^2}}}, \frac{{1 + \varepsilon }}{{1 - \varepsilon }}} \right\}$ and centered at the point $\left({ - \frac{{2 \varepsilon }}{{{{(1 - \varepsilon )}^2}}}, 0}\right)$ (Fig 3). The ellipse foci are at the points $\left({ - \frac{{{{(1 + \varepsilon )}^2}}}{{{{(1 - \varepsilon )}^2}}}, 0}\right)$ and $\left({1, 0}\right)$. Therefore, if the set $M$ lies in the half-plane $\left\{ {w \in \mathbb{C}: w < 1} \right\}$, then the 2-cycle of the system (1) can be stabilized by the control of the form (6 ).

\begin{figure}[h]
\includegraphics[scale=0.6]{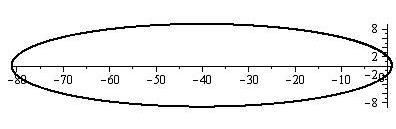}
\caption{The set ${\left({\bar C\backslash \Phi (\bar D)}\right)^ * }$ for $T = 2$, $\gamma = 0.8$}
\end{figure}

O. Morgul considered only the scalar case, and in the scalar case the set $M$ can consist only of real numbers. Then the condition of stabilization of the equilibrium is the following: $\mu \in \left({ - \frac{{1 + \varepsilon }}{{1 - \varepsilon }}, 1}\right)$ or $\mu \in \left({1, \frac{{1 + \varepsilon }}{{1 - \varepsilon }}}\right)$. Accordingly, for the 2-cycle, the stabilizability condition has the form: $\mu \in \left({ - {{\left({\frac{{1 + \varepsilon }}{{1 - \varepsilon }}}\right)}^2}, 1}\right)$.

In the case $T = 1$, the function $\Phi \left(z\right) = (1 - \varepsilon ) \frac{z}{{1 - \varepsilon z}}$ is univalent for all $\varepsilon \in \left({ - \infty, \infty }\right)$ in the entire complex plane, with the exception of the point ${z_0} = \frac{1}{\varepsilon }$. For $T = 2$ and $\varepsilon \ne 0$, the function $\Phi \left(z\right) = {(1 - \varepsilon )^2} \frac{z}{{{{(1 - \varepsilon z)}^2}}}$ is univalent in the open central disk $\left\{ {z \in \mathbb{C}: \left| z \right| < \frac{1}{{\left| \varepsilon \right|}}} \right\}$. In these cases, the functions $\Phi \left(z\right)$ for $\varepsilon \in \left[ {0, 1}\right)$ are univalent in the open central unit disc $D$.

For $T \ge 3$ the situation becomes different. The function $\Phi \left(z\right) = {(1 - \varepsilon )^T} \frac{z}{{{{(1 - \varepsilon z)}^T}}}$ will not be univalent in the disk $D$ for all $\varepsilon \in \left[ {0, 1}\right)$, but only for $\varepsilon \in \left[ {0, \frac{1}{{T - 1}}}\right)$ [16]. Since $\Phi \left({ - 1}\right) = - {\left({\frac{{1 - \varepsilon }}{{1 + \varepsilon }}}\right)^T}$, then for $\varepsilon \in \left[ {0, \frac{1}{{T - 1}}}\right)$, the condition for the stabilizability of the $T$ cycle in the scalar case takes the form $\mu \in \left({ - {{\left({\frac{{1 - \varepsilon }}{{1 + \varepsilon }}}\right)}^T}, 1}\right)$. The function $ {\left({\frac{{1 - \varepsilon }}{{1 + \varepsilon }}}\right)^T}$ increases by $\varepsilon$, hence the maximum size for the multiplier localization set will be $\varepsilon = \frac{1}{{T - 1}}$, i.e. $\mu \in \left({ - {{\left({\frac{T}{{T - 2}}}\right)}^T}, 1}\right)$. For $\varepsilon > \frac{1}{{T - 1}}$, the function $\Phi \left(z\right)$ fails to be univalent, and the interval for the multiplier will decrease (Figure 4).

The function $ {\left({\frac{T}{{T - 2}}}\right)^T}$ decreases as $T \ge 3$ asymptotically tending to ${e^2} \approx 7.389$.

\begin{figure}[h]
\begin{minipage}{0.45\textwidth}
\includegraphics[scale=0.45]{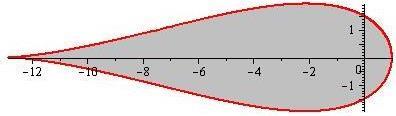}\\ 
a)
\end{minipage}
\hfil
\begin{minipage}{0.45\textwidth}
\includegraphics[scale=0.45]{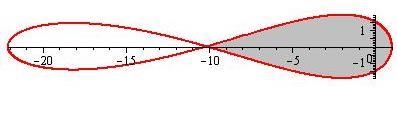}\\b)
\end{minipage}
\caption{The inverse image of the boundary of the disk $D$ (red) and the set ${\left({\bar C\backslash \Phi (\bar D)}\right)^*}$ (gray) for $T = 5$, a) $\varepsilon = 0.25$, b) $\varepsilon = 0.3$.}
\end{figure}

\section{Generalized semilinear control}

\subsection{Formulation of the problem}

A natural generalization of delayed feedback control is the joint use of the linear, nonlinear, and semilinear feedback considered in Section 2. Let us take into account that the linear feedback is ineffective, and we choose the control in the form of a convex combination of nonlinear and generalized semilinear feedbacks, specifically,
\begin{equation}\label{(11)}
{u_n} = - \left({1 - \gamma }\right) \sum_{j = 1}^{N - 1} {\varepsilon_j^{(1)}\left({f({x_{n - jT + T}}) - f({x_{n - jT}})}\right) } - \gamma \sum_{j = 1}^N {\varepsilon_j^{(2)} \left({f({x_{n - jT + T}}) - {x_{n - jT + 1}})}\right)},
\end{equation}
where $\gamma \in \left[ {0, 1}\right)$. Note that the control (11) disappears on a cycle of length $T$.

The motivation for using a control of the form (11) for $T \ge 3$ is completely obvious. In the general case, the semilinear control does not allow stabilizing cycles of system (1) of length three or more. However, the combined use of semilinear and non-linear controls can allow reducing the necessary length of history used in the feedback.

For $T = 1, 2$, we can also expect qualitatively new effects when the equilibrium position is stabilized due to a larger number of control parameters: an increase in the rate of convergence of the perturbed solutions to periodic ones, expansion of the basin of attraction of a locally stable periodic solution, and so on. In other words, combined control should improve the properties of both nonlinear and semilinear control.

Let us close the system (1) by the control (11), then we get
\begin{equation}\label{(12)}
{x_{n + 1}} = \left({1 - \gamma }\right)\sum_{j = 1}^N {{a_j}f({x_{n - j T + T}})} + \gamma \sum_{j = 1}^N {{b_j}{x_{n - j T + 1}}},
\end{equation}
where the coefficients ${a_1}, \ldots, {a_N}$, ${b_1}, \ldots, {b_N}$ are associated with the parameters $\varepsilon_1^{(1)}, \ldots, \varepsilon_{N - 1}^{(1)}$, $\varepsilon_1^{(2)}, \ldots, \varepsilon_N^{(2)}$ by a linear bijection
\[
\left\{ {\begin{array}{*{20}{c}}
{{a_1} = \frac{1}{{1 - \gamma }} - \varepsilon_1^{(1)} - \frac{\gamma }{{1 - \gamma }}\varepsilon_1^{(2)}, }\\
{{a_j} = - (\varepsilon_j^{(1)} - \varepsilon_{j - 1}^{(1)}) - \frac{\gamma }{{1 - \gamma }} \varepsilon_j^{(2)}, j = 2, \ldots, N - 1,}\\
{{a_N} = \varepsilon_{N - 1}^{(1)} - \frac{\gamma }{{1 - \gamma }}\varepsilon_N^{(2)}, }\\
{{b_j} = \varepsilon_j^{(2)}, j = 1, \ldots, N. }
\end{array}} \right.
\]
We request the invariant convex set $A$ of system (1) to be invariant for system (12). Therefore, we must require the following relations: ${a_j} \in \left[ {0, 1} \right]$, ${b_j} \in \left[ {0, 1} \right]$, $j = 1, \ldots, N$, $\sum_{j = 1}^N {{a_j}} = 1$, $\sum_{j = 1}^N {{b_j}} = 1$. To this end, additional restrictions must be imposed on the control (11). Namely, $\sum_{j = 1}^N {\varepsilon_j^{(2)}} = 1$; $\frac{1}{{1 - \gamma }} - \varepsilon_1^{(1)} \ge \frac{\gamma }{{1 - \gamma }}\varepsilon_1^{(2)} \ge 0$, $\varepsilon_{j - 1}^{(1)} - \varepsilon_j^{(1)} \ge \frac{\gamma }{{1 - \gamma }} \varepsilon_j^{(2)} \ge 0$, $j = 2, \ldots, N - 1$, $\varepsilon_{N - 1}^{(1)} \ge \frac{\gamma }{{1 - \gamma }}\varepsilon_N^{(2)} \ge 0$.

It is required to select the parameters ${a_1}, \ldots, {a_N}$, ${b_1}, \ldots, {b_N}$, satisfying the given constraints, so that the $T$-cycle of the system (12) is locally asymptotically stable, and $N$ would be the smallest.

If we let $\gamma = 0$ in (12), then we obtain the system (5), i.e. the system (1), closed by nonlinear feedback. If we let $N = 1$ in (12), then ${a_1} = {b_1} = 1$, therefore, we get the system (7), that is, as in the case of closure by the semilinear feedback. Thus, the system (12) contains the systems (5) and (7) as particular cases.

\subsection{Construction of the characteristic polynomial}

We begin the investigation of the stability of the $T$ -cycle of the system (12) with the derivation of the characteristic equation for this cycle. The classical way is the construction of the Jacobi matrix of a special mapping in the neighborhood of the cycle [10], and finding the characteristic polynomial of this matrix. As a result, this characteristic polynomial will have a cumbersome form, and the path of its simplification is not at all obvious [10].

The same polynomial can be constructed from a different mapping, and the polynomial is obtained in a very convenient form for further investigations [17, 18]. In [18] the equivalence of the classical O. Morgul method and the alternative method is proved, which will be applied below.

The solution of system (12) can be represented in the form
\begin{equation}\label{(13)}
\left\{ {\begin{array}{*{20}{c}}
{{x_{T s}} = {\eta_1} + u_s^1}\\
{{x_{T s + 1}} = {\eta_2} + u_s^2}\\
\ldots \\
{{x_{T s + T - 1}} = {\eta_T} + u_s^T}
\end{array}} \right.,
\end{equation}
$s = 0, 1, \ldots$. We substitute solution (13) into (12), assuming that in the neighborhood of the cycle the norms of the vectors $u_s^1, \ldots, u_s^T$ are small.

Let $n = T s$. Then
\[
{x_{n + 1}} = {x_{Ts + 1}} = {\eta_2} + u_s^2, {x_{n + 2}} = {x_{Ts + 2}} = {\eta_3} + u_s^3, \ldots, {x_{n + T}} = {x_{T(s + 1)}} = {\eta_1} + u_{s + 1}^1.
\]
Selecting the linear part and taking into account that ${\eta_1} = f\left({{\eta_2}}\right), \ldots, {\eta_T} = f\left({{\eta_1}}\right)$, we get
\begin{equation}\label{(14)}
\left\{ {\begin{array}{*{20}{c}}
{u_s^2 = (1 - \gamma ) f'\left({{\eta_1}}\right)({a_1}u_s^1 + \ldots + {a_N}u_{s - N + 1}^1) + \gamma ({b_1}u_{s - 1}^2 + \ldots + {b_N}u_{s - N}^2)}\\
\ldots \\
{ u_s^T = (1 - \gamma ) f'\left({{\eta_{T - 1}}}\right)({a_1}u_s^{T - 1} + \ldots + {a_N}u_{s - N + 1}^{T - 1}) + \gamma ({b_1}u_{s - 1}^T + \ldots + {b_N}u_{s - N}^T)}\\
{u_{s + 1}^1 = (1 - \gamma ) f'\left({{\eta_T}}\right)({a_1}u_s^T + \ldots + {a_N}u_{s - N + 1}^T) + \gamma ({b_1}u_s^1 + \ldots + {b_N}u_{s - N + 1}^1)}
\end{array}} \right.,
\end{equation}
where $f'\left({{\eta_j}}\right)$, $j = 1, \ldots, T$, are Jacobian matrices of dimension $m \times m$.

The system (14) is linear, so its solutions are represented in the form
\begin{equation}\label{(15)}
\left({\begin{array}{*{20}{c}}
{u_s^1}\\
{\ldots}\\
{u_s^T}
\end{array}}\right) = \left({\begin{array}{*{20}{c}}
{{c_1}}\\
{\ldots}\\
{{c_T}}
\end{array}}\right) {\lambda ^s},
\end{equation}
Where $\lambda$ is a complex number to be determined. Substituting (15) into (14), we obtain
\begin{equation}\label{(16)}
\left\{\begin{aligned}
-(1-\gamma)(a_1\lambda^s + \ldots + a_N\lambda^{s-N+1}) f'\left(\eta_1\right)c_1 + (\lambda^s-\gamma (b_1\lambda^{s - 1} + \ldots + b_N\lambda^{s-N}))c_2 &= 0\\
\ldots \\
- (1 - \gamma ) ({a_1}{\lambda ^s} + \ldots + {a_N}{\lambda ^{s - N + 1}})f'\left({{\eta_{T - 1}}}\right) {c_{T - 1}} + ({\lambda ^s} - \gamma ({b_1}{\lambda ^{s - 1}} + \ldots + {b_N}{\lambda ^{s - N}})) {c_T} &= 0\\
- (1 - \gamma ) ({a_1}{\lambda ^s} + \ldots + {a_N}{\lambda ^{s - N + 1}})f'\left({{\eta_T}}\right) {c_T} + ({\lambda ^{s + 1}} - \gamma ({b_1}{\lambda ^s} + \ldots + {b_N}{\lambda ^{s - N + 1}})) {c_1} &= 0
\end{aligned}\right.
\end{equation}
Denote by $z = \frac{1}{\lambda }$, $q\left(z\right) = {a_1} + {a_2}z + \ldots + {a_N}{z^{N - 1}}$, $p\left(z\right) = {b_1} z + {b_2} {z^2} + \ldots + {b_N}{z^N}$. The system (16) considered with respect to the vectors ${c_1}, \ldots, {c_T}$, will have a nontrivial solution if and only if the determinant of the matrix
{\tiny
\[
\left({\begin{array}{*{20}{c}}
{ - (1 - \gamma ) q(z)f'\left({{\eta_1}}\right)}&{(1 - \gamma p(z))I}&{\rm O}& \ldots &{\rm O}&{\rm O}\\
{\rm O}&{ - (1 - \gamma ) q(z)f'\left({{\eta_2}}\right)}&{(1 - \gamma p(z))I}& \ldots &{\rm O}&{\rm O}\\
\ldots & \ldots & \ldots & \ldots & \ldots & \ldots \\
{\rm O}&{\rm O}&{\rm O}& \ldots &{ - (1 - \gamma ) q(z)f'\left({{\eta_{T - 1}}}\right)}&{(1 - \gamma p(z))I}\\
{{z^{ - 1}} (1 - \gamma p(z))I}&{\rm O}&{\rm O}& \ldots &{\rm O}&{ - (1 - \gamma ) q(z)f'\left({{\eta_T}}\right)}
\end{array}}\right),
\]
}
is non-zero, where ${\rm O}$ is zero matrix of dimension $m \times m$, $I$ is the identity matrix of dimension $m \times m$. That is
\[
\det \left(z^{-1}(1 - \gamma p(z))^T I - \left((1-\gamma)q\left(z\right)\right)^T\prod_{j = 1}^T f'\left(\eta_j\right)\right) = 0.
\]

Let the eigenvalues of the product of the Jacobi matrices $\prod_{j = 1}^T {f'\left({{\eta_j}}\right)}$ be equal to ${\mu_1}, \ldots, {\mu_m}$. Then, replacing this product by Jordan's canonical form, we obtain the final form of the characteristic equation
\begin{equation}\label{(17)}
\prod_{j = 1}^m {\left({\frac{{{{(1 - \gamma p(z))}^T}}}{{z{{\left({(1 - \gamma )q\left(z\right)}\right)}^T}}} - {\mu_j}}\right)} = 0.
\end{equation}
Hence, the desired characteristic polynomial has the form
\begin{equation}\label{(18)}
\tilde f\left(\lambda\right) = \prod_{j = 1}^m {\left({ {{\left[ {{\lambda ^N} - \gamma {\lambda ^N} p({\lambda ^{ - 1}})} \right]}^{ T}} - {{(1 - \gamma )}^T}{\mu_j} {\lambda ^{T - 1}}{{\left[ {{\lambda ^{N - 1}} q({\lambda ^{ - 1}})} \right]}^{ T}}}\right)}.
\end{equation}
The polynomial (18) contains, as a special case for $N = 1$, the polynomial (9).

\subsection{Geometric criterion for local asymptotic stability of a cycle}

The next step in the study of the stability of cycles is to analyze the location of the zeros of the characteristic polynomial (18) on the complex plane. Or, equivalently, the roots of equation (17). The local stability of cycles of difference systems is equivalent to the Schur stability of the characteristic polynomial corresponding to this cycle [cf. Ex. 13]. We present this fact in the following two Lemmas.

\begin{lemma}
The $T$-cycle of system (12) is locally asymptotically stable if and only if all zeros of polynomial (18) lie in the open central unit disk $D$.
\end{lemma}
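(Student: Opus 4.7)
The plan is to reduce the statement to the standard principle of linearized stability for smooth discrete maps, combined with identification of $\tilde f$ with the characteristic polynomial of the linearized $T$-fold iterate of a suitable first-order recasting of (12).

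First, I would embed (12) as a first-order autonomous system on the extended state $X_n = (x_n, x_{n-1}, \ldots, x_{n-NT+1})^{\top} \in \mathbb{R}^{mNT}$, which contains exactly the history required to evaluate the right-hand side of (12); this gives $X_{n+1} = F(X_n)$ with $F$ of class $C^1$. The cycle $(\eta_1, \ldots, \eta_T)$ lifts to a $T$-cycle $(X^{(1)}, \ldots, X^{(T)})$ of $F$, where each $X^{(k)}$ is obtained by filling the history slots periodically with the cycle values; each $X^{(k)}$ is then a fixed point of $F^T$. Local asymptotic stability of the original cycle is equivalent, by definition, to local asymptotic stability of any one (equivalently, every) $X^{(k)}$ as a fixed point of $F^T$, which by the standard linearization theorem is in turn equivalent to the spectrum of $DF^T(X^{(k)})$ lying strictly inside $D$.

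Second, I would identify the characteristic polynomial of $DF^T(X^{(k)})$ with $\tilde f(\lambda)$. The linearization of (12) along the cycle is precisely the recurrence (14) from Section 3.2, and its solutions of exponential type $u^j_s = c_j \lambda^s$ are in bijection with eigenpairs of $DF^T(X^{(k)})$, because the index $s$ indexes cycle-length jumps and $\lambda$ measures multiplicative growth per such jump. The solvability condition for the algebraic system (16) is equation (17), and clearing denominators produces (18). A degree count seals the identification: $\deg \tilde f = mNT$ equals $\dim X_n$, so $\tilde f$ captures every eigenvalue of $DF^T(X^{(k)})$ with correct multiplicity and agrees with the true characteristic polynomial up to a nonzero constant.

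The main obstacle will be this last matching step, because the ansatz-based derivation of (17) could in principle omit eigenvalues or produce spurious roots at $\lambda = 0$ when denominators are cleared. I would handle this by multiplying (17) through by $\lambda^{mNT}$ and checking directly that the result is $\tilde f(\lambda)$ as written in (18), using the facts that $\lambda^N p(\lambda^{-1}) = b_1\lambda^{N-1}+\cdots+b_N$ and $\lambda^{N-1} q(\lambda^{-1}) = a_1\lambda^{N-1}+\cdots+a_N$ are genuine polynomials of degree at most $N-1$, so that no cancellation at the origin creates phantom roots. An alternative is to compute the characteristic polynomial of $DF^T(X^{(k)})$ directly from the block matrix displayed in Section 3.2 and observe that a cofactor expansion along the block sub-diagonals produces exactly $\tilde f$. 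Either route, combined with the first step, yields the claimed equivalence.
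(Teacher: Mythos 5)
Your argument follows essentially the same route the paper relies on: the paper states this lemma without proof, appealing to the textbook principle of linearized stability [13] and to the reduction of Section 3.2 (equations (13)--(17)), whose equivalence with the classical Jacobian-of-the-$T$-fold-iterate construction it delegates to [18]; your first-order embedding on $\mathbb{R}^{mNT}$, the degree count $\deg\tilde f = mNT$, and the denominator-clearing check are precisely the details that make that reduction rigorous. The one caveat --- shared with the paper --- is that the linearization theorem yields only sufficiency (all multipliers in $D$ implies asymptotic stability) and instability when a multiplier lies outside $\overline D$, not the literal ``only if'' (a cycle can be asymptotically stable with a multiplier on $\partial D$), so the stated equivalence should be read as excluding the critical case or as referring to exponential stability.
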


As noted in Section 2.3, it is not possible to apply the known criteria for testing the Schur stability of the polynomial (18), since the quantities ${\mu_1}, \ldots, {\mu_m}$ are not known. Therefore, to verify the local stability of cycles of system (12), we apply the geometric criterion of stability suggested by A. Solyanik. Denote by $\Phi\left(z\right) = \left(1-\gamma\right)^T\frac{z\left(q(z)\right)^T}{\left(1-\gamma p(z) \right)^T}$.

\begin{lemma}
All roots of the polynomial (18) lie in the open central unit disc $D$ if and only if 
\begin{equation}\label{(19)}
{\mu_j} \in {\left({\bar C\backslash \Phi (\overline D )}\right)^*}, j = 1, \ldots, m,
\end{equation}
where $\overline{D}$ is a closed central unit disk, $\overline{\mathbb{C}}$ is an extended complex plane, the asterisk denotes the inversion operation: $\left(z\right)^* =\frac{1}{\bar z}$.
\end{lemma}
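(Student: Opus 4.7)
The plan is to observe that Lemma~3 is essentially the multi-multiplier analogue of Lemma~1, with $\Phi$ now given by the more general formula $\Phi(z)=(1-\gamma)^{T}\,z(q(z))^{T}/(1-\gamma p(z))^{T}$. The derivation of the characteristic polynomial already did most of the work: equation (17) is exactly (18) after the substitution $\lambda=1/z$, so the proof is a matter of translating the Schur condition into a geometric one via this substitution.

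First, I would record the reduction. The polynomial (18) factors as a product over $j=1,\dots,m$, and each factor, after dividing by the appropriate power of $\lambda$ and substituting $z=\lambda^{-1}$, becomes the scalar equation $\Phi(z)=1/\mu_{j}$. A root $\lambda\in D$ of $\tilde f$ corresponds to a solution $z$ of $\Phi(z)=1/\mu_{j}$ with $|z|>1$, and conversely; the possible root $\lambda=0$ corresponds to $z=\infty$, which lies outside $\overline D$ and is therefore harmless. Thus the Schur condition ``all roots of (18) lie in $D$'' is equivalent to ``for every $j$ and every $z\in\overline D$, $\Phi(z)\ne 1/\mu_{j}$.''

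Second, I would rephrase this set-theoretically: the condition says that $1/\mu_{j}\notin\Phi(\overline D)$, i.e.\ $1/\mu_{j}\in\overline{\mathbb C}\setminus\Phi(\overline D)=:E$, for each $j=1,\dots,m$. It remains to convert ``$1/\mu_{j}\in E$'' into the form $\mu_{j}\in E^{*}$ demanded by (19), where $(w)^{*}=1/\bar w$. The polynomials $p$ and $q$ have real coefficients and $\gamma\in[0,1)$, so $\Phi(\bar z)=\overline{\Phi(z)}$; since $\overline D$ is invariant under complex conjugation, so is $\Phi(\overline D)$, and hence so is $E$. Consequently $1/\mu_{j}\in E\iff 1/\overline{\mu_{j}}\in E\iff \mu_{j}=1/\overline{(1/\overline{\mu_{j}})}\in E^{*}$, which is precisely condition (19).

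The only genuinely subtle point is the bookkeeping in the last paragraph: without the conjugation symmetry of $\Phi$ the inversion $(\cdot)^{*}=1/\bar z$ would produce $\overline{\mu_{j}}$ rather than $\mu_{j}$, so the main obstacle is to record the reality of the coefficients of $p$ and $q$ cleanly enough to identify $E$ as a conjugation-symmetric set. Once that is done the equivalence $(18)\text{ Schur}\iff(19)$ follows from the chain of equivalences above, completing the proof.
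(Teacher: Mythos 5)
Your proposal is correct and follows essentially the same route as the paper: pass from the Schur condition on (18) to the statement $\Phi(z)\ne 1/\mu_j$ for $z\in\overline D$ via the substitution $z=\lambda^{-1}$, and then restate this as the inclusion (19). The one place you go beyond the paper is the final conjugation step --- the paper simply asserts $1/\mu_j\in\overline{\mathbb C}\setminus\Phi(\overline D)$ is the same as $\mu_j\in\bigl(\overline{\mathbb C}\setminus\Phi(\overline D)\bigr)^*$, whereas you correctly observe that this needs the conjugation symmetry of $\Phi(\overline D)$ (coming from the real coefficients of $p$ and $q$), which is a worthwhile clarification rather than a deviation.
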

\begin{proof}
The polynomial (18) is Shur-stable if and only if $\tilde f(\lambda ) \ne 0$ for all $\lambda \in \bar C\backslash D$. This is equivalent to $\frac{1}{{{\mu_j}}} \ne \Phi (z)$, $z \in \bar D$, $j = 1, \ldots, m$. Consequently, the necessary and sufficient conditions for the stability of the Schur polynomial (18) are the inclusions: $\frac{1}{{{\mu_j}}} \notin \Phi (\bar D)$, or $\frac{1}{{{\mu_j}}} \in \bar C\backslash \Phi (\bar D)$, or ${\mu_j} \in {\left({\bar C\backslash \Phi (\overline D )}\right)^ * }$, $j = 1, \ldots, m$.
\end{proof}

In the general case, cycle multipliers are not known; hence, the $T$ -cycle is locally asymptotically stable if the set ${\left({\bar C\backslash \Phi (\bar D)}\right)^ * }$ covers the set $M$ of localization of multipliers. This means that the set ${\left({\bar M}\right)^ * }$ must be exceptional for the image of the disk $D$ under the mapping $\Phi \left(z\right)$. This property will be the main one for constructing the control coefficients ${a_1}, \ldots, {a_N}, {b_1}, \ldots, {b_N}$.

\subsection{The design of controls that stabilize cycles}

The next step is to construct a function $\Phi \left(z\right)$ so that the set ${\left({\bar C\backslash \Phi (\bar D)}\right)^ * }$ covers the set $M$ of localization of multipliers. In this case, it is necessary to estimate the size of the set ${\left({\bar C\backslash \Phi (\bar D)}\right)^ * }$ as a function of $N$ and $\gamma$. The function $\Phi \left(z\right) = {\left(1 - \gamma\right)^T} \frac{z\left(q(z)\right)^T}{\left(1 - \gamma p(z)\right)^T}$ depends on the polynomials $q(z)$, $p(z)$, and the parameter $\gamma \in \left[ {0, 1}\right)$, and $q(1) = 1$, $p(0) = 0$, $p(1) = 1$, hence $\Phi (0) = 0$, $\Phi (1) = 1$.

To further advance the formulation of the problem, we impose an essential restriction on the function $\Phi \left(z\right)$, namely, we know that the polynomial $q(z)$ is calculated by the formulas indicated in Section 2.2 (for some $\sigma \in \left[ {1, 2} \right]$). That ensures that $M \subseteq \left\{ {\mu \in \mathbb{R}: \mu \in \left({ - \hat \mu, 1}\right)} \right\}$ ($\hat \mu > 1$) or $M \subseteq \left\{ {\mu \in \mathbb{C}: \left| {\mu + R} \right| < R} \right\}$ ($R > {1 \mathord{\left/{\vphantom {1 2}} \right.\kern-\nulldelimiterspace} 2}$) for any admissible $\hat \mu$ and $R$, at least for a sufficiently large $N$ and $\gamma = 0$. We want to choose the polynomial $p(z)$ and the parameter $\gamma$ so that a certain desired linear dimension of the set ${\left({\bar C\backslash \Phi (\bar D)}\right)^ * }$ is maximal (or the set $\Phi \left(D\right)$ is minimal). The linear dimensions depend on the value $\Phi \left({ - 1}\right) = - {\left({1 - \gamma }\right)^T} \frac{{{{\left({q( - 1)}\right)}^T}}}{{{{\left({1 - \gamma p( - 1)}\right)}^T}}}$. Since all ${b_j}$, $j = 1, \ldots, N$ are not negative, then $\left| {p( - 1)} \right| < 1$. Consequently, it is not possible to make $\Phi \left({ - 1}\right)$ small at the expense of the polynomial $p(z)$. The polynomial $p(z)$ has a very specific role: this polynomial should be chosen so that the parameter $\gamma$ can be varied within the widest limits.

We formulate this requirement. We consider the family of functions
\begin{equation}\label{(20)}
\left\{\Phi \left(z\right) = \left(1-\gamma\right)^T\frac{z\left(q(z)\right)^T}{\left({1-\gamma p(z)}\right)^T}: \gamma \in \left[0, \gamma^*\right]\right\}, 
\end{equation}
where the polynomial $q(z)$ is defined as above. It is required to find the polynomial $p(z)$ (with a given degree and given normalization conditions) so that the family of functions (20) is univalent in the disc $D$, and $\gamma^*$ is maximal.

If it is necessary to maximize the linear dimension of the set ${\left({\bar C\backslash \Phi (\bar D)}\right)^ * }$ in the direction of the negative real axis, then the requirement of univalence of the family (20) can be replaced by a weaker requirement to be typically real. We recall that an analytic function in $D$ is said to be typically real in the sense of Rogozinsky if real preimages correspond to real values of the function [19]. In other words, a function that is typically real in $D$ must map an open upper semicircle to an open upper (or lower) half-plane.

We give a solution of this problem for $T = 1$. The function $\Phi \left(z\right) = \left({1 - \gamma }\right)\frac{{z }}{{1 - \gamma z}}$ is univalent in $D$ for $\gamma \in \left[ {0, 1}\right)$. The polynomial $z q(z)$ is also univalent for $z \in D$. Therefore, the function $\Phi \left(z\right) = \left({1 - \gamma }\right)\frac{{z q(z)}}{{1 - \gamma z q(z)}}$ is univalent for $\gamma \in \left[ {0, 1}\right)$ and $z \in D$ as a superposition of univalent functions.

In this case, the set ${\left({\bar C\backslash \Phi (\bar D)}\right)^ * } = \left\{ {\frac{1}{{1 - \gamma }}\left({\frac{1}{{zq(z)}} - \gamma }\right): z \in D} \right\}$. This set is obtained as a result of shifting the set $\left\{\frac{1}{zq(z)}: z \in D\right\}$ by $\gamma$ and the subsequent extension in $\frac{1}{1-\gamma}$ times (Fig 5).

\begin{figure}[h]
\includegraphics[scale=.7]{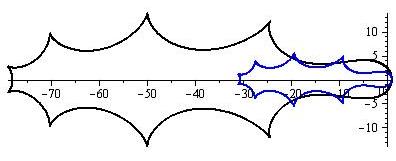}
\caption{The sets ${\left({\bar C\backslash \Phi (\bar D)}\right)^ * }$ for $N = 9$, $\sigma = 1.8$, $\gamma = 0$ -- blue, $\gamma = 0.6$ -- black.}
\end{figure}

The system (12) takes the form
\begin{equation}\label{(21)}
{x_{n + 1}} = \left({1 - \gamma }\right) \sum_{j = 1}^N {{a_j}f({x_{n - j + 1}}) } + \gamma \sum_{j = 1}^N {{a_j}{x_{n - j + 1}}}.
\end{equation}
The role of the parameter $\gamma$ is clearly visible in Fig 5. The value of $N$ plays a dual role with respect to the parameter $\gamma$. Increasing $N$ can reduce $\gamma$, leaving the linear dimensions of the set ${\left({\bar C\backslash \Phi (\bar D)}\right)^ * }$ almost unchanged. We consider the system (12) for $N = 1$:
\begin{equation}\label{(22)}
{x_{n + 1}} = \left({1 - \gamma }\right) f({x_n}) + \gamma {x_n}.
\end{equation}
The boundary of the set ${\left({\bar C\backslash \Phi (\bar D)}\right)^ * }$ is a circle passing through the point $\left({ - \frac{{1 + \gamma }}{{1 - \gamma }}, 0}\right)$. Consider system (12) with other averaging parameters  ${\gamma_1}$, ${a_1}, \ldots, {a_N}$
\begin{equation}\label{(23)}
{x_{n + 1}} = \left({1 - {\gamma_1}}\right) \sum_{j = 1}^N {{a_j}f({x_{n - j + 1}}) } + {\gamma_1} \sum_{j = 1}^N {{a_j}{x_{n - j + 1}}}.
\end{equation}
For this system, the boundary of the set ${\left({\bar C\backslash \Phi (\bar D)}\right)^ * }$ passes through the point $\left({ - \frac{{q_N^{ - 1} + {\gamma_1} }}{{1 - {\gamma_1}}}, 0}\right)$, where ${q_N} = - \sum_{j = 1}^N {{{( - 1)}^j}{a_j}}$, and the coefficients ${a_1}, \ldots, {a_N}$ are calculated using the formulas of Section 2.2 for some $\sigma \in \left[ {1, 2} \right]$. Let $q_N^{ - 1} < \frac{{1 + {\gamma_1} }}{{1 - {\gamma_1}}}$. This means that the linear dimension of the set ${\left({\bar C\backslash \Phi (\bar D)}\right)^ * }$ for the system (22) is greater than for the system (23). In order for the linear dimensions of these sets to be almost equal, the second set must be stretched. The stretching coefficient is determined by the parameter ${\gamma_1}$. It is not difficult to establish a connection between the parameters $\gamma$ and ${\gamma_1}$:
\[
{\gamma_1} = \frac{{1 - q_N^{ - 1}}}{2} + \frac{{1 + q_N^{ - 1}}}{2}\gamma.
\]
Let, for example, $\gamma = 0.9$, $N = 5$, $\sigma = 1.0$. Calculate $q_N^{ - 1} \approx 5.0$. Then ${\gamma_1} \approx 0.7$. This example shows how much the ${\gamma_1}$ parameter can be made smaller compared to $\gamma$. The sets ${\left({\bar C\backslash \Phi (\bar D)}\right)^ * }$ for systems (22), (23) are shown in Fig. 6.

\begin{figure}[h]
\includegraphics[scale=.7]{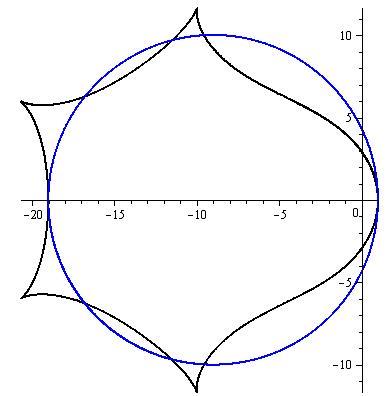}
\caption{The sets ${\left({\bar C\backslash \Phi (\bar D)}\right)^ * }$ for the systems (22) (blue) and (23) (black) for $N = 5$, $\sigma = 1.0$, $\gamma = 0.9$, ${\gamma_1} \approx 0.7$.}
\end{figure}

Let us give some more examples for $\sigma = 1.4$, $\sigma = 1.8$, $\sigma = 2.0$ and $N = 5$.

If $\sigma = 1.4$, then $q_N^{ - 1} \approx 7.856$ and ${\gamma_1} \approx 0.557$. If $\sigma = 1.8$, then $q_N^{ - 1} \approx 11.640$ and ${\gamma_1} \approx 0.368$. If $\sigma = 2.0$, then $q_N^{ - 1} \approx 13.928$ and ${\gamma_1} \approx 0.254$.

\begin{figure}[h]
\begin{minipage}{.3\textwidth}
\includegraphics[scale=.35]{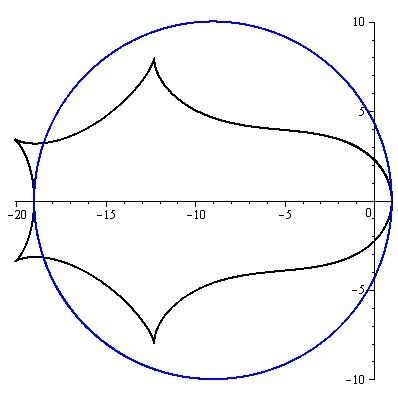}\\
\centering a)
\end{minipage}
\hfil
\begin{minipage}{.3\textwidth}
\includegraphics[scale=.35]{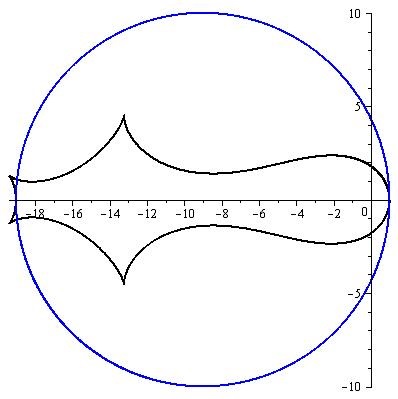}\\
\centering b)
\end{minipage}
\hfil
\begin{minipage}{.3\textwidth}
\includegraphics[scale=.35]{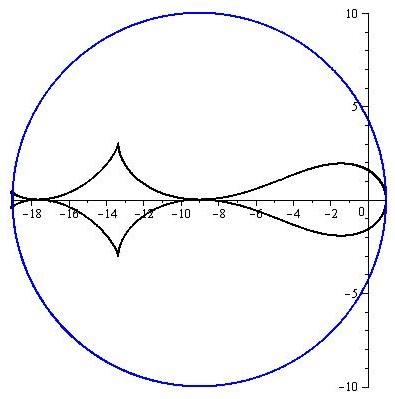}\\
\centering c)
\end{minipage}
\caption{The sets ${\left({\bar C\backslash \Phi (\bar D)}\right)^ * }$ for the systems (22) (blue) and (23) (black) for a) $N = 5$, $\sigma = 1.4$, $\gamma = 0.9$, ${\gamma_1} \approx 0.557$; b) $N = 5$, $\sigma = 1.8$, $\gamma = 0.9$, ${\gamma_1} \approx 0.368$ c) $N = 5$, $\sigma = 2.0$, $\gamma = 0.9$, ${\gamma_1} \approx 0.254$.}
\end{figure}

Thus, the introduction of retardation in the feedback allows us to reduce the value of the averaging parameter~$\gamma$. The advantages of this approach are discussed below.

\subsection{On the rate of convergence of perturbed solutions to the equilibrium position}

Consider again the system (22), and let $M$ be the localization set of the multipliers of the equilibrium position of this system. Let the control parameters $\gamma, {a_1}, \ldots, {a_N}$ be chosen so that the domain ${\left({\bar C\backslash \Phi (\bar D)}\right)^ * }$ covers the set $M$, where $\Phi \left(z\right) = \left({1 - \gamma }\right)\frac{{z q(z)}}{{1 - \gamma z q(z)}}$, and $q(z) = {a_1} + {a_2}z + \ldots + {a_N}{z^{N - 1}}$. In this case, the equilibrium position will be locally asymptotically stable. This means that for the initial vectors $ {x_1}, \ldots, {x_N}$ lying in a sufficiently small neighborhood of the equilibrium position, the solution of the system (22) determined by these initial vectors tends to the equilibrium position. Such a neighborhood is called the basin of attraction of the equilibrium position of the system (22) in the space of the initial vectors. Evaluation of the basin of attraction is, in general, a very complicated task, and it is not the subject of this article.

We note, however, that even when all the conditions for attraction of the perturbed solution to the equilibrium position are satisfied, the behavior of the perturbed solution may turn out to be complicated, and approach the equilibrium very slowly. This is the case when the multiplier of the system is near the boundary of the set ${\left({\bar C\backslash \Phi (\bar D)}\right)^ * }$. The rate at which the perturbed solution approaches the equilibrium is determined by the maximum ${\lambda ^ * }$ among the moduli of zeros of the characteristic polynomial (18) (with $q(z) = {a_1} + {a_2}z + \ldots + {a_N}{z^{N - 1}}$, $p(z) = z q(z)$, $z = \lambda^{-1}$).

Constructing the $\Phi \left({\frac{1}{\rho }{e^{i t}}}\right)$ maps as $\rho \le 1$, one can obtain the level lines ${\lambda ^ * } = \rho$. Fig 8 and 9 show these level lines for the polynomial (9) for $\gamma = 0.9$ and the polynomial (18) for $\gamma = 0.9$, $N = 5$, $\sigma \in \left\{ {1, 1.4, 1.8, 2} \right\}$. A darker color shows the level lines corresponding to a larger value of $\rho$.

\begin{figure}[h]
\includegraphics[scale=.7]{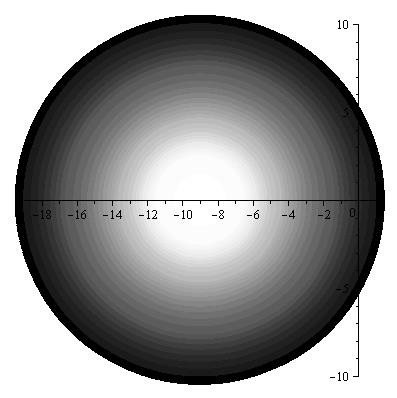}
\caption{The set ${\left({\bar C\backslash \Phi (\bar D)}\right)^ * }$, where $\Phi \left(z\right) = \left({1 - \gamma }\right)\frac{{z }}{{1 - \gamma z }}$, $\gamma = 0.9$.}
\end{figure}

\begin{figure}[h]
\begin{minipage}{.45\textwidth}
\includegraphics[scale=.6]{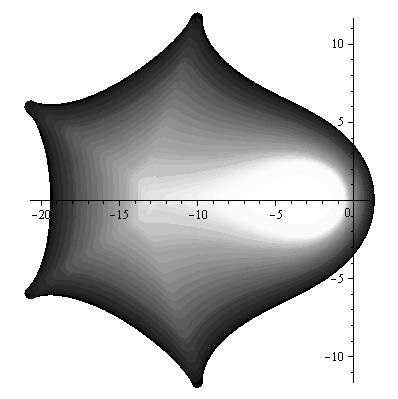}\\
\centering a)
\end{minipage}
\hfil
\begin{minipage}{.45\textwidth}
\includegraphics[scale=.6]{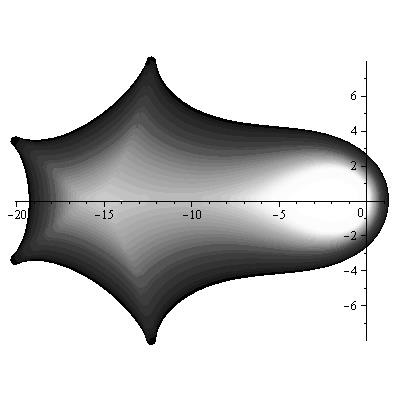}\\
\centering b)
\end{minipage}

\begin{minipage}{.45\textwidth}
\includegraphics[scale=.6]{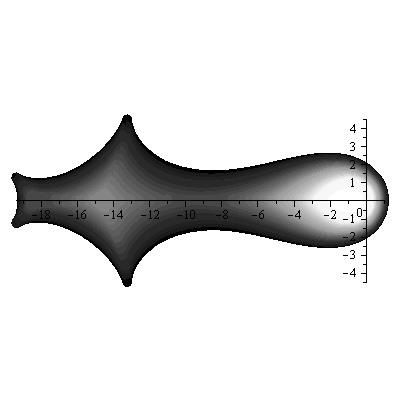}\\
\centering c)
\end{minipage}
\hfil
\begin{minipage}{.45\textwidth}
\includegraphics[scale=.6]{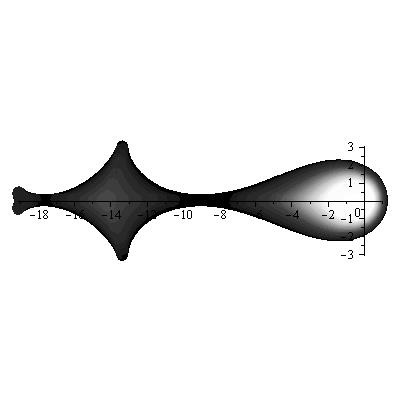}\\
\centering d)
\end{minipage}

\caption{The set ${\left({\bar C\backslash \Phi (\bar D)}\right)^ * }$, where $\Phi \left(z\right) = \left({1 - \gamma }\right)\frac{{z q(z) }}{{1 - \gamma z q(z) }}$, $N = 5$, for a) $\gamma = 0.7$, $\sigma = 1$, b) $\gamma \approx 0.557$, $\sigma = 1.4$, c) $\gamma \approx 0.368$, $\sigma = 1.8$, d) $\gamma \approx 0.254$, $\sigma = 2$.}
\end{figure}

In Fig. 8, 9, the darker regions correspond to those values of the multipliers $\mu$ for which the maximum ${\lambda^*}$ among the moduli of zeros of the characteristic polynomial is closer to unity.

Let us consider in more detail the diagrams shown in Fig. 8 and 9. The light regions determine the effective coverage of the localization set of the multipliers of the equilibrium position of the system (1). If we use the semi-linear control by O. Morgul, then the set ${\left({\bar C\backslash \Phi (\bar D)}\right)^ * }$ can theoretically be made arbitrarily large, letting $\gamma$ to unity , covering arbitrarily large regions of localization of multipliers. However, in this case, the effective coverage region shifts to a neighborhood of the point $\left({ - \frac{\gamma }{{1 - \gamma }}, 0}\right)$ i.e. it is significantly shifted from zero. Thus, if it is necessary to cover multipliers, one of which in the unit circle and the other on the negative real axis and at a considerable distance from zero, the first of them will appear in the ``dark" region, and therefore the eigenvalue that corresponds to it, lies close to the boundary of the unit circle. This, in turn, means a very slow aspiration of the perturbed solution to the equilibrium position.

When generalized semilinear control is used, the effective coverage area of the localization set of the multipliers is sufficiently close to zero and extends to the negative real axis at $\gamma \to 1$ or $N \to \infty$. Also from Fig. 9, the role of the parameter $\sigma$ is seen. Thus, the use of generalized semilinear control makes it possible to accelerate the convergence of perturbed solutions to the equilibrium position in comparison with the control by O.~Morgul. Especially in the case of a large spread of multipliers, the equilibrium positions of the system (1).

\section{Applications to computational methods for solving systems of equations}

In this section, we consider several examples of the application of the method of stabilizing the equilibrium position of the system (1) by control (11) to the possibility of generalizing the known iterative processes for solving systems of linear and nonlinear equations [20].

\subsection{Nonlinear equations}

Consider the computational scheme of the method of simple iterations (or the Richardson method) of solving a system of algebraic equations, generally speaking with complex coefficients
\begin{equation}\label{(24)}
F\left(x\right) = 0,
\end{equation}
where the differentiable function $F: {C^m} \to {C^m}$. To solve the system (24) an auxiliary difference system is constructed
\begin{equation}\label{(25)}
{x_{n + 1}} = {x_n} + G\left({x_n}\right)F\left({x_n}\right),
\end{equation}
where $G\left({x_n}\right)$ is a matrix to be chosen. The equilibrium positions of system (25) coincide with the solutions of system (24). In the classical scheme of simple iterations, the matrix $G\left({x_n}\right)$ is chosen from the condition that the multipliers of the equilibrium position of the system (25) belong to the interval $\left({ - 1, 1}\right)$. This condition can be weakened: the matrix $G\left({x_n}\right)$ should be chosen so that the multipliers of the equilibrium position of the system (25) are real and less than unity. For example, we can take $G\left({x_n}\right) = - {\left[ {F'({x_n})} \right]^ * }$, where $F\left(x\right)$ is the Jacobian matrix, the sign $*$ means Hermitian transposition. Then the system (25) takes the form
\begin{equation}\label{(26)}
{x_{n + 1}} = {x_n} - {\left[ {F'\left({x_n}\right)} \right]^ * }F\left({x_n}\right).
\end{equation}
Let $F\left(\xi\right) = 0$. If the matrix $F'(\xi )$ is not degenerated, then the matrix ${\left[ {F'(\xi )} \right]^ * }F'(\xi )$ is positive definite, i.e all its eigenvalues are greater than zero. Consequently, all the eigenvalues of the matrix $\left({I - {{\left[ {F'(\xi )} \right]}^ * }F'(\xi )}\right)$, where $I$ is unit matrix, are real and less than one. Let these eigenvalues lie in the interval $\left({ - \hat \mu, 1}\right)$.

We organize the iteration process for system (26) according to the scheme (12):
\begin{equation}\label{(27)}
{x_{n + 1}} = \sum_{j = 1}^N {{a_j}{x_{n - j + 1}}} - (1 - \gamma )\sum_{j = 1}^N {{a_j}{{\left[ {F'\left({{x_{n - j + 1}}}\right)} \right]}^ * }F\left({{x_{n - j + 1}}}\right)},
\end{equation}
where $0 < \gamma < 1$, the coefficients ${a_1}, \ldots, {a_N}$ are calculated using the formulas of Section 2.2 (for some $\sigma \in \left[ {1, 2} \right]$). Denote by ${q_N} = - \sum_{j = 1}^N {{{( - 1)}^j}{a_j}}$. Then $\gamma$ and $N$ should be chosen from the conditions: $\frac{{q_N^{ - 1} + \gamma }}{{1 - \gamma }} > \hat \mu$, $0 < \gamma < 1$.

For example, if $\sigma = 2$, then ${a_j} = 2\tan \frac{\pi }{{2(N + 1)}}\left({1 - \frac{j}{{N + 1}}}\right)\sin \frac{{\pi j}}{{N + 1}}, j = 1, \ldots, N$, ${q_N} = {\tan ^2}\frac{\pi }{{2(N + 1)}}$, and the inequality $\frac{{{{\cot }^2}\frac{\pi }{{2(N + 1)}} + \gamma }}{{1 - \gamma }} > \hat \mu$ must hold.

If $\sigma = 1$, then ${a_j} = \frac{2}{N}\left({1 - \frac{j}{{N + 1}}}\right), j = 1, \ldots, N$, ${q_N} = \frac{1}{N}$, and the inequality $\frac{{N + \gamma }}{{1 - \gamma }} > \hat \mu$ must hold.

The iterative process will converge to the equilibrium position, provided that the initial vectors lie in the region of attraction of this equilibrium position. We note that the scheme (27) can be replaced by a similar, more economical from the computational point of view
\[
\left\{ {\begin{array}{*{20}{c}}
{{{\hat x}_n} = \sum_{j = 1}^N {{a_j}{x_{n - j + 1}}},}\\
{{x_{n + 1}} = {{\hat x}_n} - (1 - \gamma ){{\left[ {F'\left({{{\hat x}_n}}\right)} \right]}^ * }F\left({{{\hat x}_n}}\right), n > N}
\end{array}} \right..
\]

\subsection{A generalized method for a simple iteration of the solution of systems of linear equations}

If the system (24) is linear, i.e. $Ax - b = 0$, then the system (26) takes the form
\[
{x_{n + 1}} = \left({I - {A^ * }A}\right){x_n} + {A^*}b,
\]
then, accordingly, the control system (27) becomes
\begin{equation}\label{(28)}
\left\{ {\begin{array}{*{20}{c}}
{ {{\hat x}_n} = \sum_{j = 1}^N {{a_j}{x_{n - j + 1}}}, }\\
{{x_{n + 1}} = \left({I - (1 - \gamma ){A^ * }A}\right){{\hat x}_n} + (1 - \gamma ){A^ * }b, n > N}
\end{array}} \right..
\end{equation}
In the case when the matrix $A$ is symmetric positive definite, the iteration scheme is simplified
\begin{equation}\label{(29)}
\left\{{\begin{array}{*{20}{c}}
{{{\hat x}_n} = \sum_{j = 1}^N {{a_j}{x_{n - j + 1}}}, }\\
{{x_{n + 1}} = \left({I - (1 - \gamma )A}\right){{\hat x}_n} + (1 - \gamma )b, n > N}
\end{array}} \right.
\end{equation}
A similar scheme is also suitable for inversion of matrices
\begin{equation}\label{(30)}
\left\{ {\begin{array}{*{20}{c}}
{ {{\hat X}_n} = \sum_{j = 1}^N {{a_j}{X_{n - j + 1}}}, }\\
{{X_{n + 1}} = \left({I - (1 - \gamma ){A^ * }A}\right){{\hat X}_n} + (1 - \gamma ){A^ * }, n > N}
\end{array}} \right.
\end{equation}
or for a symmetric positive definite matrix $A$
\begin{equation}\label{(31)}
\left\{ {\begin{array}{*{20}{c}}
{ {{\hat X}_n} = \sum_{j = 1}^N {{a_j}{X_{n - j + 1}}}, }\\
{{X_{n + 1}} = \left({I - (1 - \gamma )A}\right){{\hat X}_n} + (1 - \gamma )I, n > N}
\end{array}} \right.
\end{equation}
where ${X_n}$ is a matrix.

Theoretically, the iterative processes (28), (29), (30), (31) converge for any initial values, unlike the usual simple iteration schemes. One advantage of these schemes over other methods of solving linear equations is the absence of division operations in computational processes, which makes it possible to carry out calculations with ill posed matrices.

We note that for $\gamma = 0$, $N = 1$, the generalized simple iteration method coincides with the classical simple iteration method.

\subsection{The generalized Seidel method for solving systems of linear equations}

Suppose that the diagonal elements of $A$ are nonzero. We represent the matrix $A$ in the form
\[
A = L + \hat D + U,
\]
where $\hat D$ is a diagonal matrix, the matrices $L$ and $U$ are lower and upper triangular matrices with zero diagonals.

The classical Seidel method consists of assigning the initial vector ${x_0}$ and sequentially computing the vectors ${x_n}$: $(L + \hat D){x_{n + 1}} = - U{x_n} + b$, and then ${x_{n + 1}} = - {(L + \hat D)^{ - 1}}U{x_n} + {(L + \hat D)^{ - 1}}b$. Of course, this method does not need to build the matrix ${(L + \hat D)^{ - 1}}$. The Seidel method converges if all the eigenvalues of the matrix ${(L + \hat D)^{ - 1}}U$ lie in the central unit disc of the complex plane. This condition is satisfied, for example, if the matrix $A$ is symmetric positive definite.

Let us generalize the Seidel method. We apply to the system ${x_{n + 1}} = - {(L + \hat D)^{ - 1}}U{x_n} + {(L + \hat D)^{ - 1}}b$ the computational scheme (21). We get
\[
\left\{
\begin{array}{*{20}{c}}
\hat{x}_n = \sum_{j = 1}^N {{a_j}{x_{n - j + 1}}},\\
{x_{n + 1}} = \left({\gamma I - (1 - \gamma ){{(L + \hat D)}^{ - 1}}U}\right) {{\hat x}_n} + (1 - \gamma ){{(L + \hat D)}^{ - 1}}b, \quad n > N.
\end{array}
\right.
\]
After simple transformations, the generalized method of P.L. Seidel is reduced to an iterative scheme
\begin{equation}\label{(32)}
\left\{
\begin{array}{*{20}{c}}
\hat x_n = \sum_{j = 1}^N a_j x_{n-j+1},\\
(L + \hat D){x_{n + 1}} = \left({ - U + \gamma A}\right){{\hat x}_n} + (1 - \gamma )b, \quad n > N.
\end{array}
\right.
\end{equation}
For $\gamma = 0$, $N = 1$, the generalized Seidel method coincides with the classical one.

If the matrix $A$ is symmetric positive definite, then it is sufficient to take $N = 1$ in (32).

Let us study the question of the convergence of the iterative scheme (32). Let ${\mu_1}, \ldots, {\mu_m}$ be the eigenvalues of the matrix $ - {(L + \hat D)^{ - 1}}U$. We consider the polynomial (18) for $p({\lambda ^{ - 1}}) = {a_1} {\lambda ^{ - 1}} + \ldots + {a_N}{\lambda ^{ - N}}$, $q({\lambda ^{ - 1}}) = {a_1} + \ldots + {a_N}{\lambda ^{ - N + 1}}$. If all zeros of this polynomial lie in the central unit circle, then the iterative scheme (32) converges. For a suitable choice of $N$, ${a_1}, \ldots, {a_N}$, the scheme (32) converges if the eigenvalues of the matrix $ - {(L + \hat D)^{ - 1}}U$ lie, for example, in the set $M \subseteq \left\{ {\mu \in \mathbb{C}: \left| \mu \right| < 1} \right\} \cup \left\{ {\mu \in \mathbb{R}: \mu \in \left({ - \hat \mu, 1}\right)} \right\}$, $\hat \mu > 1$, or in the set $M \subseteq \left\{ {\mu \in \mathbb{C}: \left| \mu \right| < 1} \right\} \cup \left\{ {\mu \in \mathbb{C}: \left| {\mu + R} \right| < R} \right\}$, $R > {1 \mathord{\left/{\vphantom {1 2}} \right.\kern-\nulldelimiterspace} 2}$.

For the inversion of the matrix $A = L + \hat D + U$, we can apply the scheme
\begin{equation}\label{(33)}
\left\{ {\begin{array}{*{20}{c}}
{ {{\hat X}_n} = \sum_{j = 1}^N {{a_j}{X_{n - j + 1}}}, }\\
(L + \hat D){X_{n + 1}} = \left({ - U + \gamma A}\right){{\hat X}_n} + (1 - \gamma )I, \quad n > N.
\end{array}} \right. 
\end{equation}
We note that the complexity of generalized methods increases quite insignificantly compared with the classical ones, at each iteration it is additionally necessary to perform several addition and multiplication operations.

Similarly, we can generalize other stationary, and even nonstationary, iterative methods for solving systems of algebraic equations

\section{Numerical simulation}

Example 1. Consider a system of nonlinear equations
\begin{equation}
\left\{ {\begin{array}{*{20}{c}}
{{f_i}\left({x, y, z}\right) = 0}\\
{i = 1, 2, 3,}
\end{array}} \right.
\end{equation}
where ${f_1}\left({x, y, z}\right) = - x + {x^3} + {y^2} + 7{z^4} - 1$, ${f_2}\left({x, y, z}\right) = x - y + 2z$, ${f_3}\left({x, y, z}\right) = {(x - y - 8z)^4} - z$. This system was studied in \cite{course}. For its solution we used simple iteration and Newton methods. These methods were used to find the solutions $\left({1, 1, 0}\right)$ and $\left({ - 1, - 1, 0}\right)$, where it was noted that for the convergence, the initial approximation $\left({{x_0}, {y_0}, {z_0}}\right)$ must be close to the solution. This is especially true for ${z_0}$.

To solve system (34) we apply the iterative process (27). We calculate
\[
{\left[ {F'\left({x, y, z}\right)} \right]^ * } = \left({\begin{array}{*{20}{c}}
{ - 1 + 3{x^2}}&1&{4{{(x - y - 8z)}^3}}\\
{2y}&{ - 1}&{ - 4{{(x - y - 8z)}^3}}\\
{28{z^3}}&2&{ - 32{{(x - y - 8z)}^3} - 1}
\end{array}}\right)
\]
and set $N = 3$, $\gamma = 0.91$, $\sigma = 1.4$ in (27). Then ${a_1} \approx 0.46798$, ${a_2} \approx 0.37603$, ${a_3} \approx 0.15600$. For the initial approximations we take three points
\[
\left({{x_0}, {y_0}, {z_0}}\right) = \left({1.55, 0.74, 0.12}\right),
$$
$$ 
\left({{x_0}, {y_0}, {z_0}}\right) = \left({0.84, 0.8, - 0.01}\right),
$$ 
$$ \left({{x_0}, {y_0}, {z_0}}\right) = \left({ - 0.91, - 1.1, - 0.005}\right).
\]
Next, for each of these points, put $\left({{x_i}, {y_i}, {z_i}}\right) = \left({{x_0}, {y_0}, {z_0}}\right)$, $i = 1, 2$. Then the iterative process (27) converges, and for each initial point to a different solution: for the first one to $\left({0.95134, 1.04417, 0.04642}\right)$, for the second one to $\left({1, 1, 0}\right)$ and for the third one to $\left({ - 1, - 1, 0}\right)$. The graphs $\left({n, {x_n}}\right)$, $\left({n, {y_n}}\right)$, $\left({n, {z_n}}\right)$ for different the initial points are shown in Fig. 10 (for the first -- black, for the second -- green, for the third -- blue).

\begin{figure}[h]
\begin{minipage}{0.3\textwidth}
\includegraphics[scale=.5]{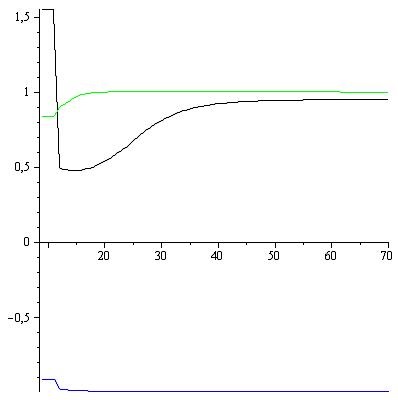}\\
\centering a)
\end{minipage}
\hfil
\begin{minipage}{0.3\textwidth}
\includegraphics[scale=.5]{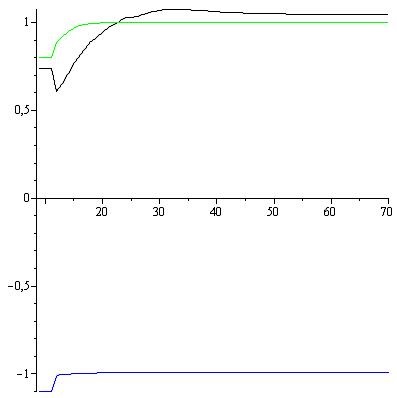}\\
\centering b)
\end{minipage}
\hfil
\begin{minipage}{0.3\textwidth}
\includegraphics[scale=.5]{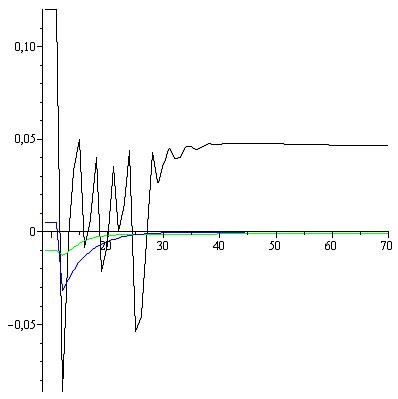}\\
\centering c)
\end{minipage}
\caption{The graphs a) $\left({n, {x_n}}\right)$, b) $\left({n, {y_n}}\right)$, c) $\left({n, {z_n}}\right)$ of the iterative process (27) of the solution of the system (34) for different initial points}
\end{figure}

Fig. 11 shows the graphs of the discrepancy $\left({n, {\varepsilon_n}}\right)$, where
\[
{\varepsilon_n} = \left| {{f_1}({x_n}, {y_n}, {z_n})} \right| + \left| {{f_2}({x_n}, {y_n}, {z_n})} \right| + \left| {{f_3}({x_n}, {y_n}, {z_n})} \right|.
\]

\begin{figure}[h]
\includegraphics[scale=.7]{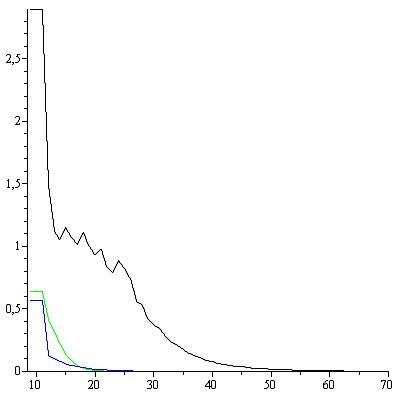}
\caption{Graphs of the residual $\left({n, {\varepsilon_n}}\right)$ of the iterative process (27) of the solution of system (34) for different initial points}
\end{figure}

We also give the values of the first iterations and the discrepancy graph for the iterative process (26) and the initial vector $\left({{x_0}, {y_0}, {z_0}}\right) = \left({1.00001, 0.99999, 0}\right)$: $\left({{x_7}, {y_7}, {z_7}}\right) = \left({1.086, 0.910, 0.246}\right)$, $\left({{x_8}, {y_8}, {z_8}}\right) = \left({234.865, - 233.087, - 1867.571}\right)$.

\begin{figure}[h]
\includegraphics[scale=.7]{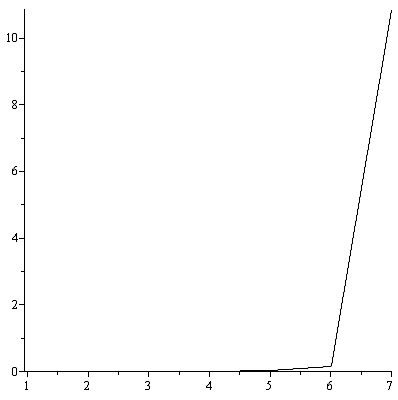}
\caption{The discrepancy graph of the $\left({n, {\varepsilon_n}}\right)$ iterative process (26) of the system solution (34)}
\end{figure}

Compared with the simple iteration method and the Newton method, the proposed method turned out to be more efficient, allowing us to find one more solution, and the basin of attraction of the equilibrium turns out to be much larger.\\

Example 2. Consider the matrix
\begin{equation}
A = \left({\begin{array}{*{20}{c}}
1&2&3\\
2&{ - 2}&{ - 10}\\
3&{ - 10}&1
\end{array}}\right)
\end{equation}
and let us apply the iterative process (33) for its inversion. Since the eigenvalues of the matrix $A$ are $\left\{ { - 11.58, 1.85, 9.73} \right\}$, the method of simple iterations of the inversion of this matrix will diverge. We expand the matrix $A$ as
\[
A = L + \hat D + U = \left({\begin{array}{*{20}{c}}
0&0&0\\
2&0&0\\
3&{ - 10}&0
\end{array}}\right) + \left({\begin{array}{*{20}{c}}
1&0&0\\
0&{ - 2}&0\\
0&0&1
\end{array}}\right) + \left({\begin{array}{*{20}{c}}
0&2&3\\
0&0&{ - 10}\\
0&0&0
\end{array}}\right),
\]
and find the eigenvalues of the matrix $ - {\left({L + \hat D}\right)^{ - 1}} U$: $\left\{ {0, - 0.41, - 72.59} \right\}$. Since these eigenvalues do not lie in the central unit circle, the Seidel method is not applicable for inversion of the matrix (35). But these eigenvalues are less than unity, therefore, we apply the generalized Seidel method.

In the formula (33) we take $N = 7$, $\gamma = 0.743$, $\sigma = 1.8$. Then ${a_1} \approx 0.14722$, ${a_2} \approx 0.21348$, ${a_3} \approx 0.22286$, ${a_4} \approx 0.19052$, ${a_5} \approx 0.13372$, ${a_6} \approx 0.07116$.

As initial approximations, we take matrices: $X[1]$ is a unit matrix, $X[2], \ldots, X[7]$ is zero. Then
\[
X[250] = \left({\begin{array}{*{20}{c}}
{0.490}&{0.154}&{0.067}\\
{0.154}&{0.038}&{ - 0.077}\\
{0.067}&{ - 0.077}&{0.029}
\end{array}}\right).
\]
Denote by ${\varepsilon_n} = {\left\| {X[n] A - I} \right\|_1}$, where the norm ${\varepsilon_n} = {\left\| \bullet \right\|_1}$ is defined as the sum of the absolute values of the matrix components. We calculate $\varepsilon_{250}\approx 3\cdot 10^{-9}$. To visualize the convergence of the matrix inversion process, we plot the discrepancy graph

\begin{figure}[h]
\includegraphics[scale=0.7]{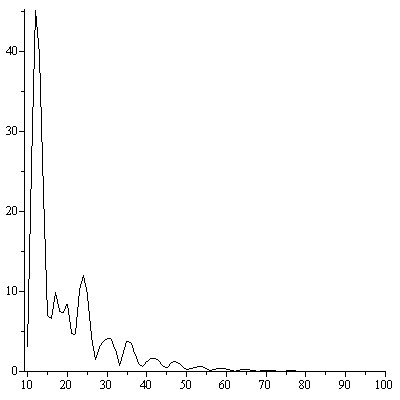}
\caption{The discrepancy graph $\left({n, {\varepsilon_n}}\right)$ of the iterative process (33) of the inverse of the matrix (35)}
\end{figure}

Because of poor initial approximation, the discrepancy at the first few steps increased sharply, however, after ten steps this discrepancy began to decrease rapidly. This confirms the practical effectiveness of the proposed iterative scheme. We note that, both with increasing $\gamma$, and with decreasing $N$, the rate of convergence will decrease.

For comparison, we give numerical calculations using Morgul's scheme, i.e. In the formula (33) we take $N = 1$. For this case, the best value for $\gamma$ will be $0.974$. The required accuracy is achieved at 800 step. We give the graphs of the discrepancy of the previous scheme and Morgul's scheme for the first 80 iterations.

\begin{figure}[h]
\includegraphics[scale=0.7]{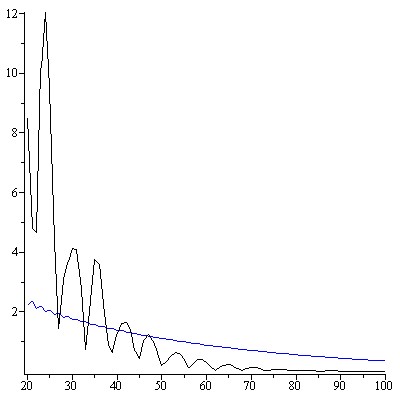}
\caption{Graphs of the discrepancy $\left({n, {\varepsilon_n}}\right)$ of the iterative process (33) of the inverse of the matrix (35) for $N = 7$ (black) and $N = 1$ (blue)}
\end{figure}

It can be seen that the discrepancy of Morgul's method decreases monotonically, but is much slower than the discrepancy of the generalized semilinear control.

\section{Conclusion}

In the article the problem of stabilization of unstable and a priori unknown periodic orbits of nonlinear systems with discrete time is considered. A new approach to constructing delayed feedback, which solves the stabilization problem, is proposed. The feedback is represented as a convex combination of nonlinear feedback and semilinear feedback introduced by O. Morgul. This preserves the advantages of both types of feedback.

The methods of geometric complex analysis were used to construct the nonlinear feedback gain factors and to obtain the conditions for the applicability of such control. These methods are used to analyze the possibility of using Morgul's scheme. The necessary and sufficient conditions for stabilization in the form of a geometric criterion for local asymptotic stability are obtained. Morgul's method was also transferred from the scalar case to the vector one.

It is important to note that the characteristic polynomials for periodic orbits in the nonlinear and semilinear cases have a very simple structure, although, naturally, different. It was this circumstance that stimulated the integration of the two approaches mentioned above. The resulting characteristic polynomial also has a rather simple structure and contains, as special cases, polynomials of nonlinear and semilinear control schemes.

The geometric criterion of stability in the nonlinear and semilinear cases consists of the analysis of images of the central unit circle under a special polynomial mapping. In a combined nonlinear-semilinear control method, instead of polynomial mappings, one has to study rational mappings. In this paper, we give a solution to the construction of quasioptimal fractional-rational maps for the case $T = 1$, that is, to stabilize the equilibrium positions. An additional introduction to the control of semilinear feedback allows us to significantly reduce the length of the  used prehistory in the delayed feedback and to increase the rate of convergence of the perturbed solutions to the periodic ones.

As an application of the proposed stabilization scheme, a possible computational algorithm for finding solutions of systems of algebraic equations is presented, based on the modification of known iterative schemes. In these schemes, the values of the variables computed in the previous steps are used. At the same time, the complexity of the new iterative schemes practically does not increase.

The above results of numerical solutions of systems of linear and nonlinear equations confirm our solution as an improvement of previous work and the effectiveness of the proposed equilibrium stabilization schemes.

\section{acknowledgment}
The authors are deeply grateful to Alexei Solyanik and Emil Iacob for their valuable comments and help in preparation of manuscript.

\bigskip

D. Dmitrishin, I. Skrynnik and E. Franzheva,  Odessa National Polytechnic University, 1 Shevchenko Ave, Odessa 65044, Ukraine.
e-mail: dmitrishin@opu.ua \\

A. Stokolos, Georgia Southern University, Statesboro, GA 30458, USA. e-mail: astokolos@georgiasouthern.edu
\end{document}